\newcommand{\parhead}[1]{{\smallskip\noindent\textbf{#1.}\xspace}}
\newcommand{\myboldmath}{}
\newcommand{\defn}[1]           {{\textit{\textbf{\myboldmath #1}}}}
\newcommand{\RNs}{{Radio Networks}\xspace}
\newcommand{\DNs}{{Dynamic Networks}\xspace}
\newcommand{\SNs}{{Sensor Networks}\xspace}
\newcommand{\FU}{{FU}\xspace}
\newcommand{\MD}{{MD}\xspace}
\newcommand{\MDFU}{{MDFU}\xspace}
\newcommand{\MDFULP}{{MDFU-LP}\xspace}
\newcommand{\FUname}{{Flow-Updating}\xspace}
\newcommand{\MDname}{{Mass-Distribution}\xspace}
\newcommand{\mig}[1]{\textcolor{blue}{#1}\marginpar{MM}}
\newcommand{\pj}[1]{\textcolor{red}{#1}\marginpar{PJ}}
\renewcommand{\mig}[1]{#1}
\renewcommand{\pj}[1]{#1}
\begin{document}

\mainmatter              

\title{
Fault-Tolerant Aggregation:\\
\FUname Meets \MDname
\thanks{This work is supported in part by the Comunidad de Madrid grant
S2009TIC-1692, Spanish MICINN grant TIN2008--06735-C02-01, Portuguese FCT
grant PTDC/EIA-EIA/104022/2008, and National Science Foundation grant CCF-0937829.}}
\titlerunning{Fault-Tolerant Aggregation}  
%
\author{
   Paulo~Sérgio~Almeida\inst{1}
      \and
   Carlos~Baquero\inst{1}
	\and
   Martín~Farach-Colton\inst{2}
      \and \\
   Paulo~Jesus\inst{1}  
   \and
   Miguel~A.~Mosteiro\inst{3}
}
\authorrunning{P. S. Almeida, C. Baquero, M. Farach-Colton, P. Jesus, and M. A. Mosteiro}   
\institute{
Depto. de Inform\'atica (CCTC-DI), Universidade do Minho, Braga, Portugal\\
\email{\{psa,cbm,pcoj\}@di.uminho.pt}
   \and
Dept. of Computer Science, Rutgers University, Piscataway, NJ, USA \&\\ 
Tokutek, Inc.\\ \email{farach@cs.rutgers.edu}
   \and 
Dept. of Computer Science, Rutgers University, Piscataway, NJ, USA \&\\ 
LADyR, GSyC, Universidad Rey Juan Carlos, Madrid, Spain\\
\email{mosteiro@cs.rutgers.edu}
}

\maketitle


\begin{abstract}
  \FUname (FU) is a fault-tolerant technique that has proved to be
  efficient in practice for the distributed computation of aggregate
  functions in communication networks where individual processors do
  not have access to global information. Previous distributed
  aggregation protocols, based on repeated sharing of input values (or
  \defn{mass}) among processors, sometimes called \MDname
  (MD) protocols, are not resilient to communication failures (or
  \defn{message loss}) because such failures yield a loss of mass.

  In this paper, we present a protocol which we call
  \defn{\MDname with \FUname (MDFU)}.  We obtain MDFU by
  applying FU techniques to classic MD.  We analyze the convergence
  time of MDFU showing that stochastic message loss produces low
  overhead. This is the first convergence proof of an FU-based
  algorithm.  We evaluate MDFU experimentally, comparing it with
  previous MD and FU protocols, and verifying the behavior predicted
  by the analysis.  Finally, given that MDFU incurs a fixed deviation
  proportional to the message-loss rate, we adjust the accuracy of
  MDFU heuristically in a new protocol called \defn{MDFU with Linear
  Prediction (MDFU-LP)}.  The evaluation shows that both MDFU and
  MDFU-LP behave very well in practice, even under high rates of
  message loss and even changing the input values dynamically.

\parhead{Keywords} Aggregate computation, Distributed computing, Radio networks, Communication networks.
\end{abstract}

\newcommand{\footnotenonumber}[1]{{\def\thempfn{}\footnotetext{#1}}}
\footnotenonumber{The authors appear in alphabetical order.}


\section{Introduction}

The distributed computation of algebraic aggregate functions is particularly challenging in settings where the processing nodes do not have access to global information such as the input size. A good example of such scenario is \SNs~\cite{akyildiz-survey,ravi-survey} where unreliable sensor nodes are deployed at random and the overall number of nodes that actually start up and sense input values may be unknown.
\mig{Under such conditions, well-known techniques for distributing
information throughout the network such as Broadcast~\cite{KP:adaptvsobliv}
or Gossiping~\cite{G:gossiping} cannot be directly applied, }\pj{and data collection is only practicable if aggregation is performed.}
Even more challenging is that loss of messages between nodes or even node crashes are likely in such harsh settings. 
It has been proved~\cite{BGMGM:agg} that the problem of aggregating values distributedly in networks where processing nodes may join and leave arbitrarily is intractable.
Hence, arbitrary adversarial message loss also yields the problem intractable, but a weaker adversary, for instance a stochastic one as in \DNs~\cite{CPMS:dynRN}, is of interest.
In this paper, under a stochastic model of message loss, we study communication networks where each node holds an input value and the average of those values~\footnote{Other algebraic aggregate functions can be computed in the same bounds using an average protocol~\cite{CPX:SNaggJournal, KDGgossip}.} must be obtained by all nodes, none of whom have access to global information of the network, \mig{\emph{not even the total number of nodes $n$}.}

A classic distributed technique for aggregation, sometimes called \defn{\MDname}~(\MD)~\cite{FMT:agg}, works in rounds. In each round, each node shares a fraction of its current average estimation with other nodes, starting from the input values~\cite{KDGgossip, BGPSgossip, Chen09, CPX:SNaggJournal,XBL:distSensorFusion, XB:distAvgCons, OSM:ConsProb, SP:DistAvgCons}. Details differ from paper to paper but a common problem is that, in the face of message loss, those protocols \mig{ either do not converge to a correct output or they require some instantaneous failure detector mechanism that updates the topology information at each node in each round.}
Recently~\cite{ABJ:flowupdate, ABJ:flowupdatedyn}, a heuristic termed \defn{\FUname} (\FU) addressed the problem assuming stochastic message loss~\cite{ABJ:flowupdate}, and even assuming that input values change and nodes may fail~\cite{ABJ:flowupdatedyn}.
The idea underlying \FU is to keep track of an aggregate function of all communication for each pair of communicating nodes, since the beginning of the protocol, so that a current value at a node can be re-computed from scratch in each round.
Empirical evaluation has shown that \FU behaves very well in practice~\cite{ABJ:flowupdate, ABJ:flowupdatedyn}, but  such protocols have eluded analysis until now.

In this paper, we introduce the concept of \FU to \MD. First, we present a protocol that we call \defn{\MDname with \FUname} (\MDFU). The main difference with \MD is that, instead of computing incrementally, the average is computed from scratch in each round using the initial input value and the accumulated value shared with other nodes so far (which we refer to as either \defn{mass shared}, or \defn{flow passed}). The main difference with \FU is that if messages are not lost the algorithm is exactly \MD, which facilitates the theoretical analysis of the convergence time under failures parameterized by the failure probability (or \defn{message-loss rate}).

\parhead{Our results}
We first leverage previous work on bounding the mixing time of Markov chains~\cite{SJ:count} to show that, for any $0<\xi<1$, the convergence time of \MDFU under reliable communication is 
$2\ln(n/\xi)/\Phi(G)^2$, 
where $\Phi(G)$ is the conductance of the underlying graph characterizing the execution of \MDFU on the network.
Then, we show that, with probability at least $1-1/n$,
for a message-loss rate $f < 1/\ln(2\Delta e)^3$, the multiplicative overhead on the convergence time produced by message loss is less than $1/(1-\sqrt{f\ln(2\Delta e)^3})$, and it is constant for $f\leq 1/(e(2\Delta e)^e)$, where $\Delta$ is the maximum number of neighbors of any node.
Also, we show that, with probability at least $1-1/n$, 
for any $0<\xi<1$, after convergence 
the expected average estimation at any node is in the interval $[(1-\xi)(1-f) \overline{v},(1+\xi)\overline{v}]$.
This is the first convergence proof for an \FU-based algorithm.

In \MDFU, if some flow is not received, a node computes the current estimation using the last flow received. Thus, in presence of message loss, nodes do not converge to the average and only some parametric bound can be guaranteed as shown. Aiming to improve the accuracy of \MDFU, we present a new heuristic protocol that we call \defn{MDFU with Linear Prediction} (MDFU-LP). The difference with \MDFU is that if some flow is not received a node computes the current estimation using an estimation of the flow that should have been received.

We evaluate \MDFU and \MDFULP experimentally and find that the performance
of \MDFU is comparable to \FU and other competing algorithms under reliable
communication. In the presence of message loss, the empirical evaluation
shows that \MDFU behaves as predicted in the analysis converging to the
average with a bias proportional to the message-loss rate.
This bias is not present in the original \FU, which converges to the
correct value even under message loss.
In a third set of evaluations, we observe that \MDFULP converges to the
correct value even under high message loss rates, with the
same speed as under reliable communication.
We also test \MDFU under changing input values to verify that it tolerates
dynamic changes in practice, in contrast to classic \MD algorithms, which
need to restart the computation each time values are changed. 

\parhead{Roadmap}
In Section~\ref{sec:prelim} we formally define the model and the problem,
and we give an overview of related work. Section~\ref{sec:alg} includes the
details of \MDFU and its analysis, whereas its empirical evaluation is
covered in Section~\ref{sec:eval}. In Section~\ref{sec:vel} we present the
details of \MDFULP and its experimental evaluation. Section
\ref{sec:value-change} evaluates \MDFU
in a dynamic setting, where input values change over time.

\section{Preliminaries}
\label{sec:prelim}
\parhead{Model}
We consider a static connected communication network formed by a set $V$ of $n$ processing \defn{nodes}.
We assume that each node has an identifier (ID).
Any pair of nodes $i,j \in V$ such that $i$ may send messages to $j$ without relying on other nodes (one hop) are called \defn{neighbors}.
We assume that the IDs are assigned so that each node is able to distinguish all its neighbors.
The set of ordered pairs of neighbors (or, \defn{edges}) is called $E$.
The network is symmetric, meaning that, for any $i,j\in V$, $(i,j)\in E$ if and only if $(j,i)\in E$.
The set of neighbors of a given node $i$ is denoted as $N_i$ and $|N_i|$ is called the \defn{degree} of $i$.
For each pair of nodes $i,j \in V$, the maximum degree between $i,j$ is denoted as $D_{ij}=\max\{|N_i|,|N_j|\}$.
The maximum degree throughout the network is denoted as $\Delta=\max_{i\in V}|N_i|$.
Each node $i$ knows $N_i$ and $D_{ij}$ for each $j\in N_i$, but does not know the size of the whole network $n$.
The time is slotted in \defn{rounds} and each round is divided in two \defn{phases}. 
In each round, a node is able to send (resp. receive) one message to (resp. from) all its neighbors (communication phase) and to perform local computations (computation phase).
However, for each $(i,j)\in E$ and for each communication phase, a message from $i$ to $j$ is lost independently with probability $f$. 
\mig{This is a crucial difference with previous work where, although edge-failures are considered, messages are not lost thanks to the availability of some failure detection mechanism. More details are given in the previous work section.}
Nodes are assumed to be reliable, i.e. they do not fail.

\parhead{Problem}
Each node $i$ holds an input value $v_i$, for $1\leq i\leq n$. 
The aim is for each node to compute the average $\overline{v}=\sum_{i=1}^n v_i/n$ without any global knowledge of the network. 
We focus on the algorithmic cost of such computation, counting only the number of rounds that the computation takes after simultaneous startup of all nodes, leaving aside medium access issues to other layers. 
This assumption could be removed as in~\cite{FMT:agg}.

\parhead{Previous Work}
Previous work on aggregate computations has been particularly prolific for the area of \RNs, including both theoretical and experimental work~\cite{KBCHLrobust, DSWgeoGossJournal, DBLP:conf/dsn/GuptaRB01, JMBagg, KDGgossip, ZGEagg, MFHHtag, NGSAagg, MSFCagg, DBLP:conf/icdcsw/KrishnamachariEW02, intanagonwiwat00directedJournal, DBLP:conf/icdcs/IntanagonwiwatEGH02, DBLP:conf/sosp/HeidemannSIGEG01}. Many of those and other aggregation techniques exploit global information of the network~\cite{FMT:agg, DBLP:conf/icdcsw/KrishnamachariEW02, MFHHtag, DBLP:conf/dsn/GuptaRB01}, or are not resilient to message loss~\cite{BGPSgossip, KDGgossip, Chen09}.

\FU is a recent fault-tolerant approach\cite{ABJ:flowupdate,ABJ:flowupdatedyn} inspired on the concept of flows (from graph theory). Like common \MD techniques, it is based on the execution of an iterative averaging process at all nodes, and all estimates eventually converge to the system-wide average.
\MD protocols exchange ``mass'', which lead them to converge to a wrong result in the case of message loss. In contrast, \FU does not exchange ``mass''.  Instead it performs idempotent flow exchanges which provide resilience against message loss. In particular, \FU keeps the initial input value at each node unchanged (in a sense, always conserving the global mass), exchanging and updating flows between neighbors for them to produce a new estimate. The estimate is computed at each node from the input values and the contribution of the flows. 
No theoretical bounds on the performance of the algorithm were provided. 
Empirical evaluation shows that \FU performs better than classic \MD algorithms, especially in low-degree networks, and it supports high levels of message loss~\cite{ABJ:flowupdate}. Moreover, it self-adapts to dynamic changes (i.e. nodes leaving/arriving and input value change) without any restart mechanism (like other approaches), and tolerates node crashes~\cite{ABJ:flowupdatedyn}.

\MD protocols for average computations in arbitrary networks based on gossiping (exchange values in pairs) were studied in~\cite{BGPSgossip,KDGgossip}. 
Results in~\cite{BGPSgossip} are presented for all gossip-based algorithms by characterizing them by a matrix that models how the algorithm evolves while sharing values in pairs iteratively. As in our results, the time bounds shown are given as a function of the spectral decomposition of the graph underlying the computation.
\mig{The work is focused on optimizing distributedly the spectral gap, in order to minimize convergence time. The dynamics of the model are motivated by changes in topology induced by nodes leaving and joining the network. Those changes may be introduced in the probability of establishing communication between any two nodes. However, the delivery of messages has to be reliable to ensure mass conservation.}
An algorithm called Push-Sum that takes advantage of the broadcast nature of \RNs (i.e., it is not restricted to gossip) is included in~\cite{KDGgossip}, yielding similar bounds.
Chen, Pandurangan, and Hu~\cite{Chen09} present an \MD algorithm that first builds a forest over the network, where each root collects the information, and then a gossiping algorithm among the roots is used. The authors show a reduction on the energy consumption with respect to the uniform gossip algorithm. 
On the other hand, the \MD algorithm presented in~\cite{CPX:SNaggJournal} relies on a different randomly chosen local leader in each round to distribute values. The bounds given are also parameterized by the eigen-structure of the underlying graph. This result was extended more recently~\cite{CH:aggSHS} to networks with a time-varying connection graph, \mig{but the protocol requires to update the matrix underlying such graph in each round.}

\mig{\MD protocols have been used also for Distributed Average Consensus~\cite{XBL:distSensorFusion, Xiao:2006jn, XB:distAvgCons, OSM:ConsProb, SP:DistAvgCons, Spanos:2005ux} within Control Theory, but they do not apply to our model. 
For example, in~\cite{XBL:distSensorFusion, Xiao:2006jn} the model includes unreliable communication links, but the algorithm requires instantaneous update of the topology information held at each node at the beginning of each round. Others, either rely on similar features~\cite{OSM:ConsProb, SP:DistAvgCons, Spanos:2005ux} or do not consider changes in topology at all~\cite{XB:distAvgCons}.
}

The common problem in all the \MD protocols is that they are not resilient to message loss, because it implies a loss of mass. Hence, if messages are lost, they need to restart the computation from scratch. In MDFU, message loss has an impact on convergence time, which  we show to be small, but the computation recovers from those losses, yielding the correct value. In fact, it is this characteristic of MDFU and FU in general what makes the technique suitable for dynamic settings in which the input values change with time.

\section{\MDFU}
\label{sec:alg}
\begin{algorithm}[htbp]
\label{alg}
\caption{\MDFU. Pseudocode for node $i$. $e_i$ is the estimate of node $i$. $F_{in}(j)$ is the cumulative inflow from node $j$. $F_{out}(j)$ is the cumulative outflow to node $j$. 
}
\dontprintsemicolon

\tcp{initialization}
$e_i \leftarrow v_i$\;
\ForEach{$j\in N_i$}{
$F_{in}(j) \leftarrow 0$\;
$F_{out}(j) \leftarrow e_i/\left(2D_{ij}\right)$\;
}
\BlankLine\;

\ForEach{round}{
\tcp{communication phase}
\ForEach{$j\in N_i$}{
Send $j$ message $\langle i,F_{out}(j)\rangle$\;
}
\ForEach{$\langle j,F\rangle$ received}{
$F_{in}(j) \leftarrow F$\;
}
\tcp{computation phase}
$e_i \leftarrow v_i + \sum_{j\in N_i} (F_{in}(j) - F_{out}(j)) $\;
\ForEach{$j\in N_i$}{
$F_{out}(j) \leftarrow F_{out}(j) + e_i/\left(2D_{ij}\right)$\;
}
}
\end{algorithm}

As in previous work~\cite{KDGgossip, CPX:SNaggJournal, BGPSgossip, FMT:agg}, \MDFU is based on repeatedly sharing among neighbors a fraction of the average estimated so far.
Unlike in those papers, in \MDFU the estimation is computed from scratch in each round, as in \FU~\cite{ABJ:flowupdate,ABJ:flowupdatedyn}. 
For that purpose, each node keeps track of the cumulative value passed to each neighbor (or, cumulative flow) since the protocol started. 
Together with the original input value, those flows allow each node to recompute the average estimation in each round.
Should some flow from node $i$ to node $j$ be lost, $j$ temporarily computes the estimation using the last flow received from $i$.
Further details can be found in Algorithm~\ref{alg}.

\label{sec:ana}
Recall that the aim is to compute the average $\overline{v}=\sum_{i=1}^n v_i/n$ of all input values.
Let $e_i(r)$ be the average \defn{estimate} of node $i$ in round $r$, and $\varepsilon(r)=\max_i \{|e_i(r)-\overline{v}|/\overline{v}\}$ be the maximum relative \defn{error} of the average estimates in round $r$.
We want to bound the number of rounds after which the maximum relative error is below some parametric value $\xi$.

In each round, a node shares a fraction of its current estimate with each neighbor. Therefore, the execution of each round can be characterized by a \defn{transition matrix}, denoted as $\mathbf{P} = (p_{ij})$, $\forall i,j \in V$, such that for any round $r$ where messages are not lost
\begin{align*}
p_{ij} = \left\{ \begin{array}{ll} 
1/(2D_{ij}) & \textrm{if $i\neq j$ and $(i,j) \in E$,}\\
1-\sum_{k\in N_i}1/(2D_{ik}) & \textrm{if $i=j$,}\\
0 & \textrm{$(i,j) \notin E$}
\end{array} \right. 
\end{align*}
and $\mathbf{e}(r+1)=\mathbf{e}(r) \mathbf{P}$, where $\mathbf{e}(\cdot)$ is the row vector $(e_1(\cdot) e_2(\cdot) \dots e_n(\cdot))$.


\subsection{Convergence Time for $f=0$}
\label{sec:nofail}
Consider first the case when the communication is reliable, that is $f=0$. Then, the above characterization is round independent and, given that $\mathbf{P}$ is stochastic, it can be seen as the transition matrix of a time-homogeneous Markov chain $(X_r)_{r=1}^\infty$ 
with finite state space $V$. Furthermore, $(X_r)_{r=1}^\infty$ is irreducible, and aperiodic, then it is ergodic and it has a unique stationary distribution. Given that $\mathbf{P}$ is doubly stochastic such stationary distribution is $\pi_i=1/n$ for all $i\in V$. 
Thus, bounding the convergence time of $(X_r)_{r=1}^\infty$ we have a bound for the convergence time of \MDFU without message loss.
The following notation will be useful.
Let $G$ be a weighted undirected graph with set of nodes $V$ and where, for each pair $i,j \in V$, the edge $(i,j)$ has weight $\pi_i p_{ij}$. $G$ is called the underlying graph of the Markov chain $(X_r)_{r=1}^\infty$. 
The following quantity characterizes the likelihood that the chain does not stay in a subset of the state space with small stationary probability.
Let the \defn{conductance} of graph $G$ be
$$\Phi(G)=\min_{\substack{\emptyset\subset S\subset V\\\sum_{i\in S} \pi_i\leq1/2}} \frac{\sum_{i,j\in S} p_{ij}\pi_i}{\sum_{i\in S} \pi_i}.$$
The following theorem shows the convergence time of \MDFU with reliable communication parameterized in the conductance of $G$.

\begin{theorem}
\label{thm:succ}
For any communication network of $n$ nodes running \MDFU, 
for any $0<\xi<1$, 
and for $r_c=2\ln(n/\xi)/\Phi(G)^2$, 
if $f=0$, it holds that $\varepsilon(r)\leq \xi$ for any round $r\geq r_c$,
where $\Phi(G)$ is the conductance of the underlying graph characterizing the execution of \MDFU on the network.
\end{theorem}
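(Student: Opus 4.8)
The plan is to reduce the statement to a bound on the second-largest eigenvalue of $\mathbf{P}$ and then run a standard spectral-decay argument, the conductance entering only through a black-box inequality from \cite{SJ:count}. First I would record the structure of $\mathbf{P}$: it is symmetric, since $p_{ij}=1/(2D_{ij})=1/(2D_{ji})=p_{ji}$ for neighbors and $D_{ij}=D_{ji}$; it is doubly stochastic, as already noted; and it is \emph{lazy}, because $p_{ii}=1-\sum_{k\in N_i}1/(2D_{ik})\ge 1-\sum_{k\in N_i}1/(2|N_i|)=1/2$. Writing $\mathbf{P}=\tfrac12(I+R)$ with $R=2\mathbf{P}-I$, one checks that $R$ is symmetric and stochastic with nonnegative entries, so its eigenvalues lie in $[-1,1]$ and hence the eigenvalues of $\mathbf{P}$ all lie in $[0,1]$. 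Consequently $\mathbf{P}$ admits an orthonormal eigenbasis $\mathbf{u}_1=\mathbf{1}/\sqrt{n},\mathbf{u}_2,\dots,\mathbf{u}_n$ with real eigenvalues $1=\lambda_1>\lambda_2\ge\cdots\ge\lambda_n\ge 0$, so that $\max_{k\ge 2}|\lambda_k|=\lambda_2$; the laziness is exactly what rules out a large negative eigenvalue.

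Next I would iterate the recurrence $\mathbf{e}(r)=\mathbf{e}(0)\mathbf{P}^r$. Expanding $\mathbf{e}(0)=\sum_k c_k\mathbf{u}_k$ and using $\mathbf{u}_k\mathbf{P}=\lambda_k\mathbf{u}_k$ gives $\mathbf{e}(r)=\sum_k c_k\lambda_k^r\mathbf{u}_k$, where $c_1=\mathbf{e}(0)\cdot\mathbf{1}/\sqrt{n}=\sqrt{n}\,\overline v$, so $c_1\mathbf{u}_1=\overline v\,\mathbf{1}$ and $\mathbf{e}(r)-\overline v\,\mathbf{1}=\sum_{k\ge 2}c_k\lambda_k^r\mathbf{u}_k$. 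Orthonormality then yields
\[
\|\mathbf{e}(r)-\overline v\,\mathbf{1}\|_2^2=\sum_{k\ge 2}c_k^2\lambda_k^{2r}\le\lambda_2^{2r}\sum_{k\ge 2}c_k^2=\lambda_2^{2r}\,\|\mathbf{v}-\overline v\,\mathbf{1}\|_2^2 .
\]
Bounding $\|\mathbf{v}-\overline v\,\mathbf{1}\|_2\le\|\mathbf{v}\|_2\le\|\mathbf{v}\|_1=n\overline v$ (using that the inputs are nonnegative, so all cross terms in $\|\mathbf{v}\|_1^2$ are nonnegative) and $\|\cdot\|_\infty\le\|\cdot\|_2$, I obtain $\varepsilon(r)=\max_i|e_i(r)-\overline v|/\overline v\le n\lambda_2^r$.

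Finally I would invoke the conductance bound: the Sinclair--Jerrum (Cheeger-type) inequality for reversible chains gives $\lambda_2\le 1-\Phi(G)^2/2\le e^{-\Phi(G)^2/2}$, hence $\varepsilon(r)\le n\,e^{-r\Phi(G)^2/2}$. This quantity is decreasing in $r$ and is at most $\xi$ exactly when $r\ge 2\ln(n/\xi)/\Phi(G)^2=r_c$, which proves the theorem.

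The main obstacle is the passage from the combinatorial conductance $\Phi(G)$ to the analytic quantity $\lambda_2$, i.e. justifying the inequality $\lambda_2\le 1-\Phi(G)^2/2$: one must check that the hypotheses needed to quote \cite{SJ:count} hold (reversibility, which here is immediate from symmetry of $\mathbf{P}$) and that the form of the inequality being used involves only $\Phi(G)$ and $\lambda_2$, not also the smallest eigenvalue. A secondary technical point is the laziness argument above, without which the clean estimate $\|\mathbf{e}(r)-\overline v\,\mathbf{1}\|_2^2\le\lambda_2^{2r}\|\mathbf{v}-\overline v\,\mathbf{1}\|_2^2$ could fail; and the norm comparison $\|\mathbf{v}\|_2\le n\overline v$ relies on nonnegativity of the inputs (otherwise one keeps the factor $\|\mathbf{v}-\overline v\,\mathbf{1}\|_2/\overline v$ and absorbs it into the logarithm at the cost of a different constant).
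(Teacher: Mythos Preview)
Your argument is correct and essentially the same as the paper's. The paper likewise records that $p_{ii}\ge 1/2$ so all eigenvalues of $\mathbf{P}$ are nonnegative, invokes the Sinclair--Jerrum bound $\lambda_2\le 1-\Phi(G)^2/2$, and reduces to showing $\lambda_2^r\le\xi/n$; the only cosmetic difference is that the paper quotes the entrywise mixing estimate $\max_{i,j}|(\mathbf{P}^r)_{ij}-1/n|\le\lambda_2^r$ (and hence the bound on $\varepsilon(r)$) as a second black box from \cite{SJ:count}, whereas you derive the same decay rate directly via the $\ell_2$ spectral decomposition.
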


\begin{proof}
We want to find a value of $r_c$ such that for all $r\geq r_c$ it holds that $\max_i \{|e_i(r)-\overline{v}|/\overline{v}\}\leq \xi$. Then, we want $\max_i \{|e_i(r)/\sum_{j\in V}v_j-1/n|\}\leq \xi / n$. Given that $e_i(r)=\sum_{j\in V}v_j(\mathbf{P}^r)_{ji}$, it is enough to have $\max_{j,i \in V}\{|(\mathbf{P}^r)_{ji}-1/n|\}\leq \xi / n$.
On the other hand, given that $p_{ij}\pi_i=p_{ji}\pi_j$ for all $i,j \in V$, the Markov chain 
is time-reversible. Then, as proved in~\cite{SJ:count}, it is $\max_{i,j\in V}|(\mathbf{P}^r)_{ij}-\pi_j|/\pi_j \leq \lambda^r_1 / \min_{j\in V} \pi_j$, where $\lambda_1$ is the second largest eigenvalue of $\mathbf{P}$ (all the eigenvalues of $\mathbf{P}$ are positive because $p_{ii}\geq 1/2$ for all $i\in V$). Given that $\pi_i=1/n$ for all $i\in V$, we have $\max_{i,j\in V}|(\mathbf{P}^r)_{ij}-1/n| \leq \lambda^r_1$. Thus, from the inequality above, it is enough to have $\lambda^r_1\leq \xi / n$. 
As proved also in~\cite{SJ:count}, given that $(X_r)_{r=1}^\infty$ is ergodic and time-reversible, it is $\lambda_1\leq 1-\Phi(G)^2/2$.
Then, it is enough $(1-\Phi(G)^2/2)^r\leq \xi / n$.
Given that $\Phi(G)\leq 1$, using that $1-x\leq e^{-x}$ for $x<1$, the claim follows. 
\end{proof}


\subsection{Convergence Time for $f>0$}
\label{sec:fail}
\label{sec:rw}

\parhead{Mixing time of a multiple random walk}
Recall that we carry out an average computation of $n$ input values where each node $i$ shares a $1/(2D_{ij})$ fraction of its estimate in each round of the computation with each neighboring node $j$. We have characterized each round of the computation with a transition matrix $\mathbf{P}$ so that in each round $r$ the vector of estimates $\mathbf{e}(r)$ is multiplied by $\mathbf{P}$.

The Markov chain defined in Section~\ref{sec:nofail} that models the average computation is also a characterization of a random walk, that is, a stochastic process on the set of nodes $V$ where a particle moves around the network randomly. In our case, for each round, instead of choosing the next node where the particle will be located uniformly among neighbors, the matrix of transition probabilities is $\mathbf{P}$. A state of this process (which of course is also Markovian) is a distribution of the location of the particle over the nodes. The measure of this random walk that becomes relevant in our application is the mixing time, that is, the number of rounds before such distribution will be \emph{close} to uniform. The mixing time of this random walk is the same as the convergence time of the Markov chain $(X_r)_{r=1}^\infty$, setting appropriately for each case the desired maximum deviation with respect to the stationary distribution as follows. 

A useful representation of this process in our application is to assume a set $S$ of particles, all of the same value $\nu$, so that at the beginning each node $i$ holds a subset $S_i$ of particles such that $|S_i|\nu=v_i$. In order to analyze the computation along many rounds, we assume that $\nu$ is small enough so that particles are not divided. 
We define the \defn{mixing time} of this multiple random walk as the number of rounds before the distribution of all particles is within $\xi/n$ of the uniform, for $0<\xi<1$.
Without message loss, it can be seen that the mixing time of the above defined multiple random walk is the same as the convergence time of the Markov chain $(X_r)_{r=1}^\infty$ defined in Section~\ref{sec:nofail}. We consider now the case where messages may be lost.


The following lemma shows that, for $f < 1/\ln(2\Delta e)^3$, the multiplicative overhead on the mixing time produced by message loss is less than $1/(1-\sqrt{f\ln(2\Delta e)^3})$, and it is constant for $f\leq 1/(e(2\Delta e)^e)$.
The proof 
uses concentration bounds on the delay that any particle may suffer due to message loss.

\begin{lemma}
\label{lemma:rw}
Consider any communication network of $n$ nodes running \MDFU, 
any $0 < f \leq 1/\ln(2\Delta e)^3$,
any $0<\xi<1$, 
let $r_c = 2\ln(n/\xi)/\Phi(G)^2$,
and let
\begin{align*}
q = \left\{ \begin{array}{ll} 
1/e & \textrm{ if $f\leq 1/(e(2\Delta e)^e)$}\\
f \left(\sqrt{4\ln(2\Delta e)^3/f-3}-1\right)/2  & \textrm{ otherwise.}
\end{array} \right.
\end{align*}
Consider a multiple random walk modeling \MDFU as described.
With probability at least $1-1/n$, after $r=r_c/(1-q)$ rounds it holds that
$\max_{x\in S,i\in V}|p_x(i)-1/n|\leq \xi/n$, where $p_x(i)$ is the probability that particle $x$ is located at node $i$.
\end{lemma}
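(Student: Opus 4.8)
The plan is to couple the lossy multiple random walk with the ideal (loss-free) walk analyzed in Theorem~\ref{thm:succ}, viewing message loss as a *delay* suffered by individual particles rather than a change in the transition probabilities. The key observation is that a particle at node $i$ destined for neighbor $j$ simply fails to move when the message $\langle i, F_{out}(j)\rangle$ is lost; since each such message is lost independently with probability $f$, and since in the multiple-random-walk representation a particle makes at most one "attempted" step per round along a fixed edge, the number of rounds a particle spends waiting before each successful step is stochastically dominated by a geometric random variable. Thus after $r$ rounds each particle has, in expectation, taken roughly $(1-q')r$ genuine steps for an appropriate $q'$, and if this number of genuine steps is at least $r_c$ then the particle's distribution is within $\xi/n$ of uniform by the loss-free analysis. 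So the heart of the argument is: (i) set up the coupling/domination carefully, (ii) prove a concentration bound showing that with high probability *every* particle has taken at least $r_c$ genuine steps after $r = r_c/(1-q)$ rounds, and (iii) take a union bound over the (finitely many) particles and nodes.

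First I would make the delay bound precise. Fix a particle $x$. In each round it is located at some node, and the probability that its scheduled move is "blocked" in that round is at most $f$ (it equals $f$ when it is scheduled to cross an edge and $0$ when it stays put, but $f$ is a valid upper bound). Hence the number of rounds needed to accumulate $r_c$ genuine steps is stochastically dominated by a negative binomial: a sum $T = \sum_{\ell=1}^{r_c} G_\ell$ of i.i.d. geometric variables with success probability $1-f$, so $\mathbb{E}[T] = r_c/(1-f)$. The claim then reduces to showing $\Pr[T > r_c/(1-q)] \leq 1/n^2$ (so that a union bound over at most $n$ particles—actually $|S|$, but we can take $\nu$ so $|S|\le \mathrm{poly}(n)$ and absorb logs, or bound $|S|$ crudely—gives failure probability $\le 1/n$). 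Equivalently, writing the number of *blocked* rounds among the first $m$ rounds as $\mathrm{Bin}(m,f)$, we need: with probability $\ge 1-1/n^2$, among any $r = r_c/(1-q)$ rounds the number of blocked rounds is at most $qr$, i.e. a Chernoff bound on a binomial with mean $fr$ exceeding $qr$. Since $q \ge f$ by construction, this is a standard upper-tail Chernoff estimate $\Pr[\mathrm{Bin}(r,f) \ge qr] \le \exp(-r\, D(q\|f))$ with $D$ the binary relative entropy; the chosen value of $q$ is precisely what makes $r_c \cdot D(q\|f)/(1-q) \ge 2\ln n \ge \ln(n^2)$.

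Next I would verify the arithmetic tying $q$ to $f$ and $\Delta$. Recall $r_c = 2\ln(n/\xi)/\Phi(G)^2 \ge 2\ln n \cdot \Phi(G)^{-2}$, and $\Phi(G) \ge 1/(\text{something like } \Delta \cdot \text{diam})$; more usefully, one has a bound $\Phi(G)^{-2} \le \ln(2\Delta e)^3$ of the type used in mixing-time estimates for low-degree graphs (this is where $\ln(2\Delta e)^3$ enters, and it is exactly the quantity appearing in the hypothesis $f \le 1/\ln(2\Delta e)^3$). Plugging in, it suffices that $D(q\|f)/(1-q) \ge 1/\ln(2\Delta e)^3$. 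In the regime $f \le 1/(e(2\Delta e)^e)$ one checks $q = 1/e$ works with room to spare (here $f$ is so small that $D(1/e\,\|\,f)$ is essentially $\frac1e\ln(1/f) \ge \frac1e \cdot (1 + e\ln(2\Delta e)) $, comfortably large); in the intermediate regime the quadratic-looking formula $q = f(\sqrt{4\ln(2\Delta e)^3/f - 3}-1)/2$ is simply the solution of the equation one gets after bounding $D(q\|f)$ from below by $q\ln(q/(ef))$ or similar and solving for the threshold. I would carry out this verification via the standard estimate $D(q\|f) \ge q\ln(q/f) - q$ (valid since $D(q\|f) \ge q\ln(q/f) + (1-q)\ln\frac{1-q}{1-f} \ge q\ln(q/f) - q$) and then check that the displayed $q$ makes $\bigl(q\ln(q/f)-q\bigr)/(1-q) \ge 1/\ln(2\Delta e)^3$.

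The main obstacle I anticipate is not the probabilistic coupling—which is routine once framed as geometric delays—but rather the bookkeeping in step (iii) and the precise constant-chasing in the last paragraph: one must be careful that $|S|$ (the number of particles) does not blow up the union bound, which is handled by noting the statement is really about the per-particle distribution $p_x(i)$ and each particle's trajectory is an independent copy of the same delayed walk, so a union bound over *distinct walk-types* (at most $n$ starting nodes) plus over $n$ target nodes suffices, giving $n^2$ events and hence the requirement $\Pr[\text{bad}] \le 1/n^3$ per event—absorbed by a slightly more generous reading of $r_c \ge 2\ln(n/\xi)$. A second delicate point is justifying that blocked rounds for a fixed particle are independent across rounds: they are, because the loss event for edge $(i,j)$ in a given round is independent of everything else, and the particle's scheduled edge in a round is determined by its (random but previously-fixed) position, so conditioning on the trajectory-so-far the current block is an independent $\mathrm{Bernoulli}(\le f)$. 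With these two points handled, the Chernoff bound and the algebraic identification of $q$ complete the proof.
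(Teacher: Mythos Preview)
Your high-level plan---treat message loss as per-particle delay, bound the number of delayed rounds by Chernoff, then union bound---is exactly the paper's approach. The gap is in the union-bound step and, consequently, in your account of where the quantity $\ln(2\Delta e)^3$ comes from.

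In the paper the union bound is \emph{not} over $n$ starting nodes or $n^2$ start/target pairs; it is over $n(2\Delta)^r$ particles, because to ensure particles are never split one must take $\nu$ small enough that after $r$ rounds of dividing each mass by at most $2\Delta$ there are up to $n(2\Delta)^r$ of them. The Chernoff tail $\exp\bigl(-r(q-f)^2/(3f)\bigr)$ must therefore beat the multiplicity $(2\Delta)^r$, which yields the requirement
\[
2\Delta\,e^{\,1-q}\le \exp\!\bigl((q-f)^2/(3f)\bigr),
\]
i.e.\ $q^2+fq+f^2\ge f\ln(2\Delta e)^3$, whose positive root is precisely the displayed formula for $q$. (For very small $f$ the paper switches to the sharper Chernoff form $\bigl((fe/q)^q e^{-f}\bigr)^r$ and verifies $q=1/e$ suffices.) So the $\Delta$ dependence enters purely through the size of the particle family in the union bound.

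Your alternative explanation---that $\ln(2\Delta e)^3$ arises from a bound $\Phi(G)^{-2}\le\ln(2\Delta e)^3$---is false in general: on a path of $n$ vertices one has $\Delta=2$ yet $\Phi(G)=\Theta(1/n)$, so $\Phi(G)^{-2}=\Theta(n^2)$ while $\ln(2\Delta e)^3$ is an absolute constant. No inequality of that type can hold, and in any case you would need a \emph{lower} bound on $r_c$ (hence on $\Phi(G)^{-2}$) to strengthen the Chernoff estimate, not an upper bound. Your proposed reduction to $n$ ``walk-types'' also does not suffice: conditional on a fixed loss pattern, different walk realizations from the same starting node traverse different edges and hence accrue different numbers of delays, so controlling the conditional distribution $p_x(\cdot)$ still requires controlling \emph{every} sample path, which is what forces the $(2\Delta)^r$ factor.
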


\begin{proof}
For clarity, we model the network with a directed graph $\{V,E\}$, with $V$ and $E$ as defined in the model.
A message loss in the edge $(i,j)\in E$ is modeled with a buffer on the edge $(i,j)$ where a particle is ``delayed''. 
For a computation of $r$ rounds, it is enough to consider at most $n(2\Delta)^r$ particles, because initially there are $n$ input values and each value is divided $r$ times by at most $2\Delta$.
Consider the random walk of a given particle $x\in S$.
For each round, $x$ is delayed with probability $f$. 
We bound the mixing time by bounding the number of rounds that any particle is delayed as follows.

Assume first that $1/(e(2\Delta e)^e) < f \leq 1/\ln(2\Delta e)^3$.
For $r$ rounds, the expected number of rounds when a given particle is delayed is $fr$.
Using Chernoff-Hoeffding bounds~\cite{book:mitzenmacher},
the probability that a given particle $x$ is delayed more than $qr$ rounds, $f\leq q\leq 1$, is at most $ \exp(-fr(q/f-1)^2/3)$.
Then, the probability that \emph{some} particle is delayed more than $qr$ rounds is 
\begin{align*}
\mathbf{Pr}(\textrm{$\exists x: x$ delayed $>qr$}) &\leq n\left(\frac{2\Delta}{\exp((q-f)^2/(3f))}\right)^r.
\end{align*}
Assuming that $2\Delta \exp(1-q) \leq \exp((q-f)^2/(3f))$, we get that 
\begin{align*}
\mathbf{Pr}(\textrm{$\exists x: x$ delayed $>qr$}) &\leq n\left(\frac{1}{\exp(1-q)}\right)^{\frac{2\ln(n/\xi)}{(1-q)\Phi(G)^2}}\\
&= n \exp\left(-\frac{2\ln(n/\xi)}{\Phi(G)^2}\right), \textrm{given that $\xi\leq 1$ and $\Phi(G)\leq 1$,}\\
&\leq n \exp(-2\ln n)\\
&= 1/n.
\end{align*}
Then, it remains to prove
\begin{align*}
2\Delta \exp(1-q) &\leq \exp((q-f)^2/(3f))\\
q^2+fq+f^2-f\ln(2\Delta e)^3 &\geq 0.
\end{align*}
Which is true for $q = f\left(\sqrt{4\ln(2\Delta e)^3/f-3}-1\right)/2$, which is feasible because, for $f\leq 1/\ln(2\Delta e)^3$, such value of $q$ implies $f\leq q\leq 1$ .

Consider now the case $0 < f \leq 1/(e(2\Delta e)^e)$.
Again, using Chernoff-Hoeffding bounds, the probability that a given particle $x$ is delayed more than $qr$ rounds, $f\leq q\leq 1$, is at most $\left((fe/q)^q/e^{f}\right)^{r}$
Then, the probability that \emph{some} particle is delayed more than $qr$ rounds is
\begin{align*}
\mathbf{Pr}(\textrm{$\exists x: x$ delayed $>qr$}) &\leq n\left(\frac{2\Delta}{e^{f}}\left(\frac{fe}{q}\right)^q\right)^{r}.
\end{align*}
Assuming that $2\Delta (fe/q)^q / e^{f} \leq 1/\exp(1-q)$ we get as before,
\begin{align*}
\mathbf{Pr}(\textrm{$\exists x: x$ delayed $>qr$}) &\leq 1/n.
\end{align*}
Then, it remains to prove
\begin{align*}
\frac{2\Delta}{e^{f}}\left(\frac{fe}{q}\right)^q &\leq \frac{1}{e^{1-q}}\\
2\Delta e^{1-f} &\leq \left(q/f\right)^q\\
2\Delta e &\leq \left(q/f\right)^q.
\end{align*}
Which is true for $f\leq 1/(e(2\Delta e)^e)$ and $q=1/e$.

\end{proof}


\parhead{The expected number of particles at each node as a function of $f$}
Analyzing a multiple random walk of a set of particles, in Lemma~\ref{lemma:rw} we obtained a bound on the time that any particle takes to converge to a stationary uniform distribution. However, for any probability of message loss $f>0$ and for any round, there is a positive probability that some particles are located in the edge buffers defined in the proof of such lemma. Hence, the fact that each particle is uniformly distributed over nodes does not imply that the expected average held at the nodes has converged, because only particles located at nodes are uniformly distributed. We bound the expected error in this section. The proof of the following lemma 
is based on computing the overall expected ratio of particles in nodes with respect to delayed particles.

\begin{lemma}
\label{lemma:error}
Consider a multiple random walk modeling \MDFU under the conditions of Lemma~\ref{lemma:rw}.
Then, with probability at least $1-1/n$, for any round $r\geq r_c/(1-q)$, the expected number of particles $\mathbf{E}(|S_i^{(r)}|)$ in each node $i$ is $(1-\xi)(1-f) |S|/n \leq \mathbf{E}(|S_i^{(r)}|) \leq (1+\xi)|S|/n$.
\end{lemma}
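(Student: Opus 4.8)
The plan is to read off the conclusion from Lemma~\ref{lemma:rw} together with one extra ingredient: a bound on how much mass is, in expectation, sitting in the edge buffers rather than at nodes. Fix a round $r \ge r_c/(1-q)$. By linearity of expectation, $\mathbf{E}(|S_i^{(r)}|) = \sum_{x\in S}\Pr(x\text{ at node }i\text{ in round }r)$, and for each particle $x$ I would split $\Pr(x\text{ at }i) = \Pr(A_x)\cdot\Pr(x\text{ at }i\mid A_x)$, where $A_x$ is the event that $x$ occupies some node (rather than an edge buffer) at the start of round $r$. So it suffices to (i) pin down the conditional location $\Pr(x\text{ at }i\mid A_x)$ via Lemma~\ref{lemma:rw}, and (ii) lower bound $\Pr(A_x)$ by $1-f$.

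For (i): from the proof of Lemma~\ref{lemma:rw}, with probability at least $1-1/n$ every particle is delayed in at most $qr$ of the first $r$ rounds, hence spends at least $r-qr = r_c$ rounds at nodes, each such round being exactly one application of the transition matrix $\mathbf{P}$. Crucially, the coins realizing $\mathbf{P}$ (the walk choices) are independent of the message-loss coins, which generate the delays and hence determine both $A_x$ and the number of $\mathbf{P}$-applications a particle has completed by round $r$; so conditioning on $A_x$ leaves the law of $x$'s node unchanged, and since on the good event that number is always at least $r_c$, Lemma~\ref{lemma:rw} gives $|\Pr(x\text{ at }i\mid A_x) - 1/n| \le \xi/n$ for every $i\in V$. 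Combined with the trivial $\Pr(A_x)\le 1$, this already yields the upper bound $\mathbf{E}(|S_i^{(r)}|) \le (1+\xi)|S|/n$.

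For (ii): I would argue at the level of expected totals. Let $B^{(t)} = |S| - \sum_i|S_i^{(t)}|$ be the number of delayed particles in round $t$, and show $\mathbf{E}(B^{(t)}) \le f|S|$ for every $t\ge 1$ by a one-step recursion: a particle at a node becomes delayed only if its chunk is forwarded along an edge and that message is lost (probability at most $f$), and a delayed particle stays delayed only if this round's re-send — the cumulative-flow message of Algorithm~\ref{alg} — is again lost (probability $f$); hence $\mathbf{E}(B^{(t)}) \le f\bigl(|S| - \mathbf{E}(B^{(t-1)})\bigr) + f\,\mathbf{E}(B^{(t-1)}) = f|S|$, with base case $B^{(0)} = 0$. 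Equivalently $\Pr(A_x)\ge 1-f$ for each $x$. Feeding this and (i) into $\sum_{x\in S}\Pr(x\text{ at }i)$ gives, on the good event of Lemma~\ref{lemma:rw}, $\mathbf{E}(|S_i^{(r)}|) \ge (1-f)(1-\xi)|S|/n$, so both inequalities hold with probability at least $1-1/n$.

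The step I expect to be the main obstacle is (i): one must argue cleanly that the number of $\mathbf{P}$-steps a given particle has taken by round $r$, although random, is on the good event always at least $r_c$, so that the $\xi/n$ mixing bound applies uniformly over it, and one must justify that the walk/loss independence makes the conditioning on $A_x$ harmless (including handling the self-loops of $\mathbf{P}$, which count as $\mathbf{P}$-applications but keep a particle at a node). A secondary, purely technical point is that restricting to the good event perturbs the expectations by a lower-order $O(|S|/n^2)$ term — the bad event for a fixed particle has probability at most $1/n$ by the proof of Lemma~\ref{lemma:rw} — which has to be absorbed into the stated bound; and the fact that the cumulative-flow rule correlates the delays of co-located particles is irrelevant, since only per-particle marginals enter through linearity of expectation.
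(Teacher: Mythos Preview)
Your proposal is correct and follows essentially the same two-ingredient structure as the paper: bound the expected mass stuck in edge buffers (your step (ii)) and then combine with the mixing guarantee of Lemma~\ref{lemma:rw} (your step (i)) to get the per-node bound. Your recursion $\mathbf{E}(B^{(t)})\le f(|S|-\mathbf{E}(B^{(t-1)}))+f\,\mathbf{E}(B^{(t-1)})=f|S|$ is a slight streamlining of the paper's argument, which instead tracks the ratio $\mathbf{E}(|S_E^{(r)}|)/\mathbf{E}(|S_V^{(r)}|)\le f/(1-f)$ via the sharper $p_i\le 1/2$, but both reach $\mathbf{E}(|S_V^{(r)}|)\ge(1-f)|S|$; and the paper, like you, simply invokes Lemma~\ref{lemma:rw} at the end without spelling out the conditioning on $A_x$ that you (rightly) flag as the delicate point.
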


\begin{proof}
We consider a multiple random walk of a set of particles $S$ over a directed graph $V,E$, with $V$ and $E$ as defined in the model.
A message loss in the edge $(i,j)\in E$ is modeled with a buffer on the edge $(i,j)$ where a particle is ``delayed''. 
The following notation will be useful.
For any round $r$,
$S_X^{(r)}$ is the set of particles held at the set $X$ (node set or edge-buffer set),
and $S_i^{(r)}$ is the set of particles held at the node $i$.
Let $p_i = \sum_{j_\in N_i}  1/(2D_{ij})$ for any node $i$. 
By linearity of expectation, at the end of round $r$, the expected number of particles in all buffer-edges and the expected number of particles in all nodes are



\begin{align}
\mathbf{E}(|S_E^{(r)}|)
&= \sum_{i\in V} \mathbf{E}(|S_i^{(r-1)}|) f p_i + f \mathbf{E}(|S_E^{(r-1)}|)\label{expalledges}\\
%
\mathbf{E}(|S_V^{(r)}|) 
&= \sum_{i\in V} \mathbf{E}(|S_i^{(r-1)}|) ( 1- f p_i ) + (1-f) \mathbf{E}(|S_E^{(r-1)}|).\label{expallnodes}
\end{align}


Using that $p_i\leq 1/2$ in~\ref{expalledges} and~\ref{expallnodes}, we have
\begin{align*}
\mathbf{E}(|S_E^{(r)}|)
&\leq (f/2) \mathbf{E}(|S_V^{(r-1)}|) + f \mathbf{E}(|S_E^{(r-1)}|)\\
\mathbf{E}(|S_V^{(r)}|) 
&\geq ( 1- f/2 ) \mathbf{E}(|S_V^{(r-1)}|) + (1-f) \mathbf{E}(|S_E^{(r-1)}|).
\end{align*}

Then,
\begin{align*}
\frac{\mathbf{E}(|S_E^{(r)}|)}{\mathbf{E}(|S_V^{(r)}|)}
&\leq \frac{(f/2) \mathbf{E}(|S_V^{(r-1)}|) + f \mathbf{E}(|S_E^{(r-1)}|)}{( 1- f/2 ) \mathbf{E}(|S_V^{(r-1)}|) + (1-f) \mathbf{E}(|S_E^{(r-1)}|)}\\
&\leq \frac{f}{1-f}, \textrm{ because } \frac{1-f/2}{1-f} \geq \frac{1}{2}.
\end{align*}

Then, given that $\mathbf{E}(|S_V^{(r)}|) + \mathbf{E}(|S_E^{(r)}|) = |S|$, we have $\mathbf{E}(|S_V^{(r)}|) \geq (1-f) |S|$.
As proved in Lemma~\ref{lemma:rw}, 
with probability at least $1-1/n$, for any round $r\geq r_c/(1-q)$, $\max_{x\in S,i\in V}|p_x(i)-1/n|\leq \xi/n$, where $p_x(i)$ is the probability that particle $x$ is located at node $i$ and $q$ as defined in such lemma.
Then, for any node $i\in V$, it is $(1-\xi)(1-f) |S|/n \leq \mathbf{E}(|S_i^{(r)}|) \leq (1+\xi)|S|/n$ and the claim follows.

\end{proof}


Based on the previous lemmata, the following theorem shows the convergence time of \MDFU.
\begin{theorem}
\label{thm:fail}
Consider any communication network of $n$ nodes running \MDFU. 
For any $0 < f \leq 1/\ln(2\Delta e)^3$, let $q=1/e$ if $f\leq 1/(e(2\Delta e)^e)$, or $q=f \left(\sqrt{4\ln(2\Delta e)^3/f-3}-1\right)/2$ otherwise,
and let $r_c = 2 \ln(n/\xi)/\Phi(G)^2$.
Then, with probability at least $1-1/n$, 
for any $0<\xi<1$ and any round $r\geq r_c/(1-q)$, 
the expected 
average estimation at any node $i\in V$ is $(1-\xi)(1-f) \overline{v} \leq \mathbf{E}(e_i^{(r)}) \leq (1+\xi)\overline{v}$,
where $\Phi(G)$ is the conductance of the underlying graph characterizing the execution of \MDFU on the network.
\end{theorem}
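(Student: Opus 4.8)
The plan is to translate the particle-count bounds of Lemma~\ref{lemma:error} into bounds on the estimates of \MDFU, via the representation of \MDFU by a multiple random walk introduced just before Lemma~\ref{lemma:rw}. Instantiate that walk with a set $S$ of particles of common value $\nu$, node $i$ initially holding $|S_i^{(0)}|$ particles with $|S_i^{(0)}|\nu=v_i$; then $|S|\nu=\sum_i v_i$ and $\overline{v}=|S|\nu/n$. It therefore suffices to establish, for every node $i$ and round $r$, the identity
\[
\mathbf{E}(e_i^{(r)}) \;=\; \nu\,\mathbf{E}(|S_i^{(r)}|),
\]
since then the inequalities of Lemma~\ref{lemma:error}, multiplied by $\nu$ and simplified with $|S|\nu/n=\overline{v}$, become exactly $(1-\xi)(1-f)\overline{v}\leq \mathbf{E}(e_i^{(r)})\leq (1+\xi)\overline{v}$.

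To obtain this identity I would first read the flow variables of Algorithm~\ref{alg} as a mass-conservation scheme. By the update rule, $e_i^{(r)} = v_i + \sum_{j\in N_i}(F_{in}(j)-F_{out}(j))$; summing over all nodes, the cumulative flows cancel on every edge except for the part that $j$ has not yet received from $i$ because of a lost message, so $\sum_i e_i^{(r)} = \sum_i v_i$ minus the total mass currently ``in transit''. Thus $e_i^{(r)}$ is precisely the mass residing at node $i$ when mass pushed over an edge whose last message was lost is parked in that edge's buffer and delivered at the next successful round. Taking $\nu$ small enough that particles are never split (as in the construction preceding Lemma~\ref{lemma:rw}), this mass-distribution process is, in expectation, exactly the multiple random walk of Lemmas~\ref{lemma:rw} and~\ref{lemma:error}: by linearity of expectation one may follow each particle separately, and a particle at $i$ moves to a neighbour $j$ with probability $1/(2D_{ij})=p_{ij}$ and is delayed with probability $f$ in each round --- the dynamics encoded by recursions~(\ref{expalledges})--(\ref{expallnodes}). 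Hence $\mathbf{E}(e_i^{(r)})=\nu\,\mathbf{E}(|S_i^{(r)}|)$, and applying Lemma~\ref{lemma:error} with the stated $q$ and $r_c$ finishes the proof.

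The step I expect to require the most care is this correspondence: one must check that \MDFU's flow bookkeeping parks exactly the missing mass on the lost edge and releases all of it at the next successful round, and that replacing \MDFU's single per-edge loss coin by the idealized walk's independent per-particle delay coins changes only higher moments, not expectations --- each particle's marginal delay probability per round is $f$ either way, so linearity of expectation suffices. The accompanying ``particles are never divided'' idealization (really a $\nu\to 0$ limit) and a harmless one-round offset between \MDFU's round index and the walk's step index are the remaining small points to state cleanly; everything else is bookkeeping.
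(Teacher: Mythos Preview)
Your proposal is correct and follows the paper's own proof, which simply invokes Lemmas~\ref{lemma:rw} and~\ref{lemma:error} and then ``multipl[ies] by the value of each particle''. You are in fact more careful than the paper: the correspondence $\mathbf{E}(e_i^{(r)})=\nu\,\mathbf{E}(|S_i^{(r)}|)$, the fact that per-edge loss coins versus per-particle delay coins agree in expectation by linearity, and the $\nu\to 0$ idealization are all points the paper's one-sentence proof leaves implicit.
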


\begin{proof}
From Lemmas~\ref{lemma:rw} and~\ref{lemma:error}, we know that, 
under the conditions of this theorem,
for any round $r\geq r_c/(1-q)$ and any node $i\in V$, 
with probability at least $1-1/n$
the expected number of particles (of the multiple random walk modeling \MDFU) is $(1-\xi)(1-f) |S|/n \leq \mathbf{E}(|S_i^{(r)}|) \leq (1+\xi)|S|/n$. 
Then, multiplying by the value of each particle the claim follows.
\end{proof}

\section{Empirical Evaluation of MDFU}
\label{sec:eval}
We evalutated \MDFU in a synchronous network simulator, using an
Erdős–Rényi\cite{erdos:randomgraphs} network with 1000 nodes and 5000 links
(giving an average degree of 10).
The input values were chosen as when performing node counting~\cite{JMBagg};
i.e., all values being 0 except a random node with value 1; this scenario is
more demanding, leading to slower convergence, than uniformly random input
values.
The evaluation aimed at: 1) comparing its convergence speed under no loss with
competing algorithms; 2) evaluating its behavior under message loss; 3)
checking its ability to perform continuous estimation over time-varying
input values.

\subsection{Convergence Speed Against Related Algorithms Under no Faults}

To evaluate wether \MDFU is a practical algorithm in terms of convergence
speed, we compared it against three other algorithms: the original \FUname~\cite{ABJ:flowupdate,ABJ:flowupdatedyn}(FU), Distributed Random
Grouping~\cite{CPX:SNaggJournal} (DRG), and
Push-Synopses~\cite{KDGgossip}. Figure~\ref{fig:mdfu-others} shows the coefficient of
variation of the root mean square error as a function of the number of
rounds (averaging 30 runs), with CV(RMSE) = $\sqrt{\sum_{i \in V} (e_i -
\overline{v})^2 / n} / \overline{v}$.

It can be seen that \MDFU is competitive, providing approximate estimates
slightly faster than FU and DRG and giving reasonably accurate results
roughly in line with them. It loses to them for very high precision
estimation and to Push-Synopses for all precisions (but both DRG and
Push-Synopses are not fault-tolerant).

\begin{figure}[t]
\centering
\includegraphics[width=0.49\textwidth]{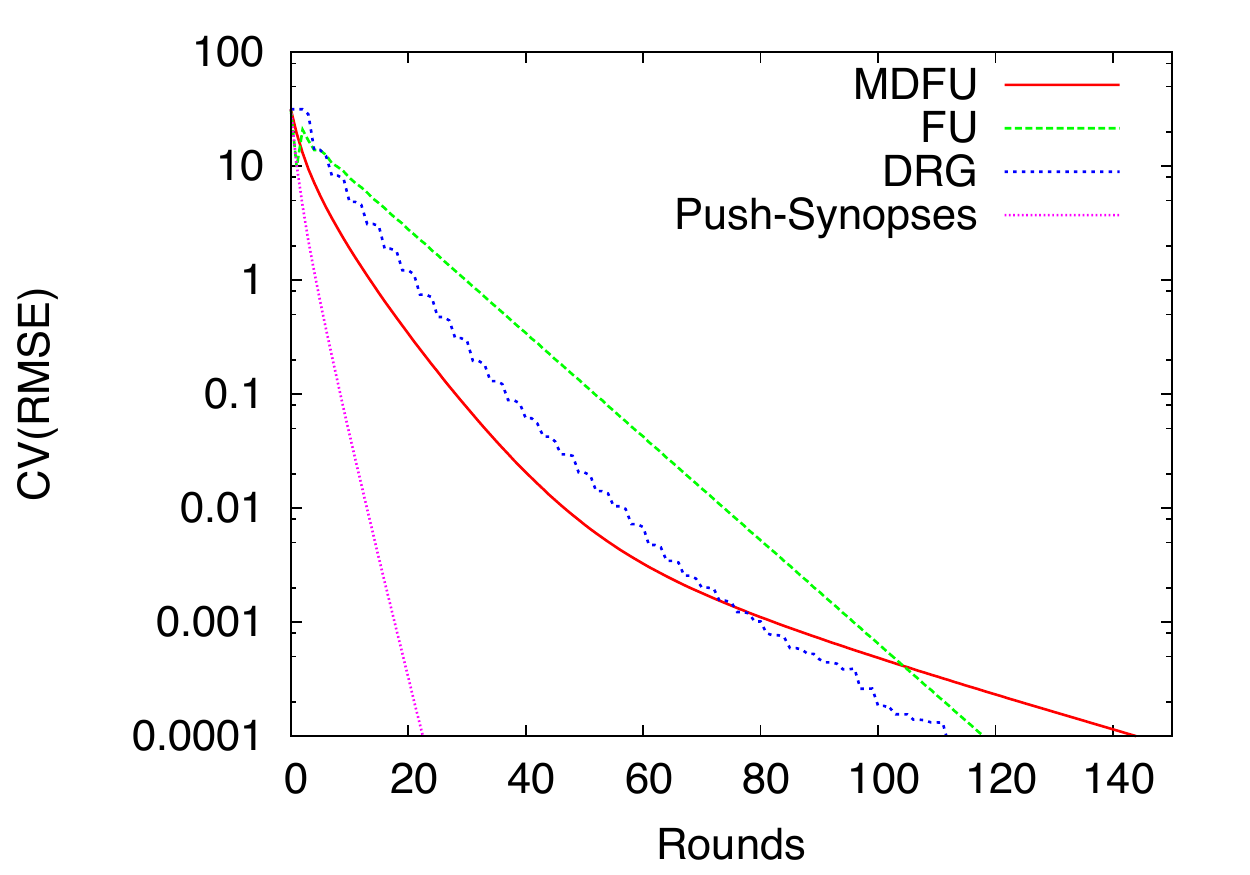}
\caption{CV(RMSE) over rounds in a 1000 node 5000 link Erdos-Renyi network.}
\label{fig:mdfu-others}
\end{figure}

\subsection{Fault Tolerance: Resilience to Message Loss}

To evaluate the resilience of \MDFU to message loss, we performed
simulations using different rates of message loss (0, 1\%, 5\%, 10\%), where
each individual message may fail to reach the destination with these given
probabilities. We measured the effect of message loss on both the CV(RMSE)
and also on the maximum relative error. As can be seen in
Figure~\ref{fig:cvrmse-max-loss}, as long as there is some message loss,
they do not tend to zero anymore, but converge to a value that is a
function of the message loss rate.

\begin{figure}[t]
	\centering
	\includegraphics[width=0.49\textwidth]{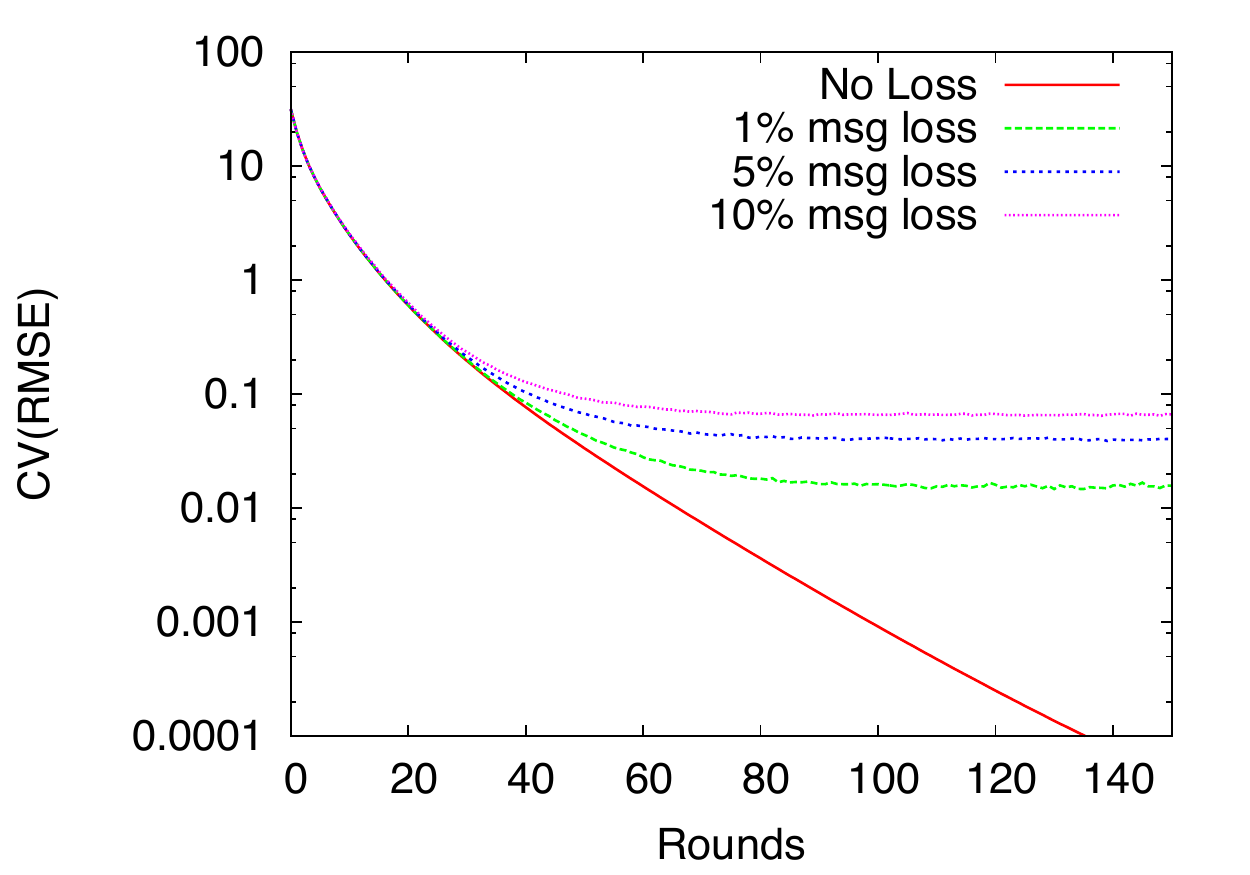}
	\includegraphics[width=0.49\textwidth]{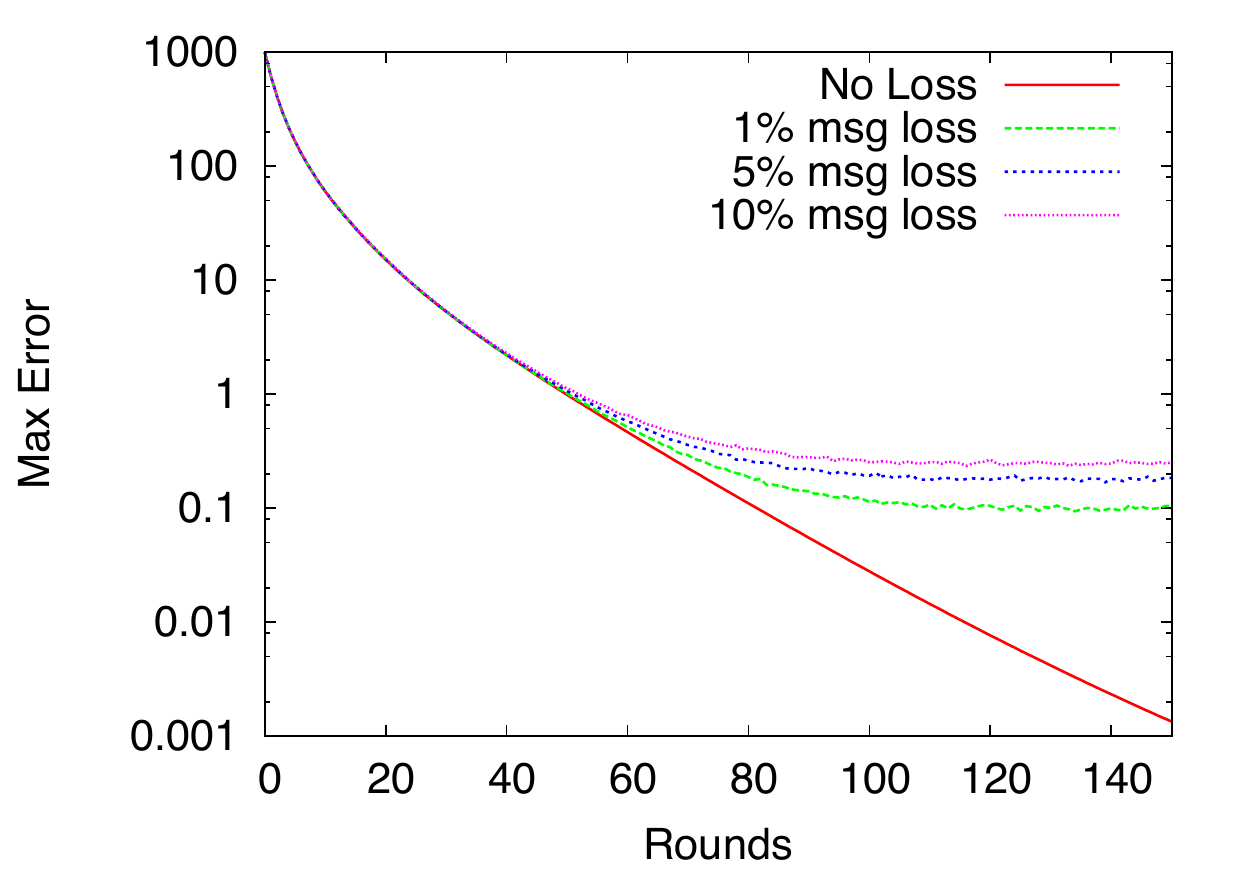}
	\caption{Coefficient of variation of the RMSE and maximum relative
	error for \MDFU in a 1000 node 5000 link Erdos-Renyi network.}
	\label{fig:cvrmse-max-loss}
\end{figure}

We also measured the behavior of the average of the estimates over the whole
network, and observed that there is a deviation from the correct value
($\overline v$, the average of the input values) towards lower values.
Figure~\ref{fig:mdfu-bias} shows the relative deviation from the correct
value over time, for different message loss rates.
It can be seen that this bias is roughly proportional to the message
loss rate (for these small message loss rates).

Relating these results with the theoretical analysis of \MDFU, we can see
that this bias should not come as a surprise. From Theorem~\ref{thm:fail},
the expected value of the estimation converges to a band between $(1 -
f)\overline v$ and $\overline v$. The relative deviation of
the lower boundary is thus proportinal to the message loss rate.
Figure~\ref{fig:mdfu-bias} also shows this boundary for the different
message loss rates.

This kind of bias was not present in the original FU, in which the average
of the estimates tends to the correct value. In \MDFU the message loss rate
limits the precision that can be achieved, but it does not impact
convergence, contrary to classic mass distribution algorithms where,
given message loss, the more rounds pass, the more mass is lost and the more
the estimates deviate from the correct value, failing to converge.


\begin{figure}[t]
\centering
\includegraphics[width=0.49\textwidth]{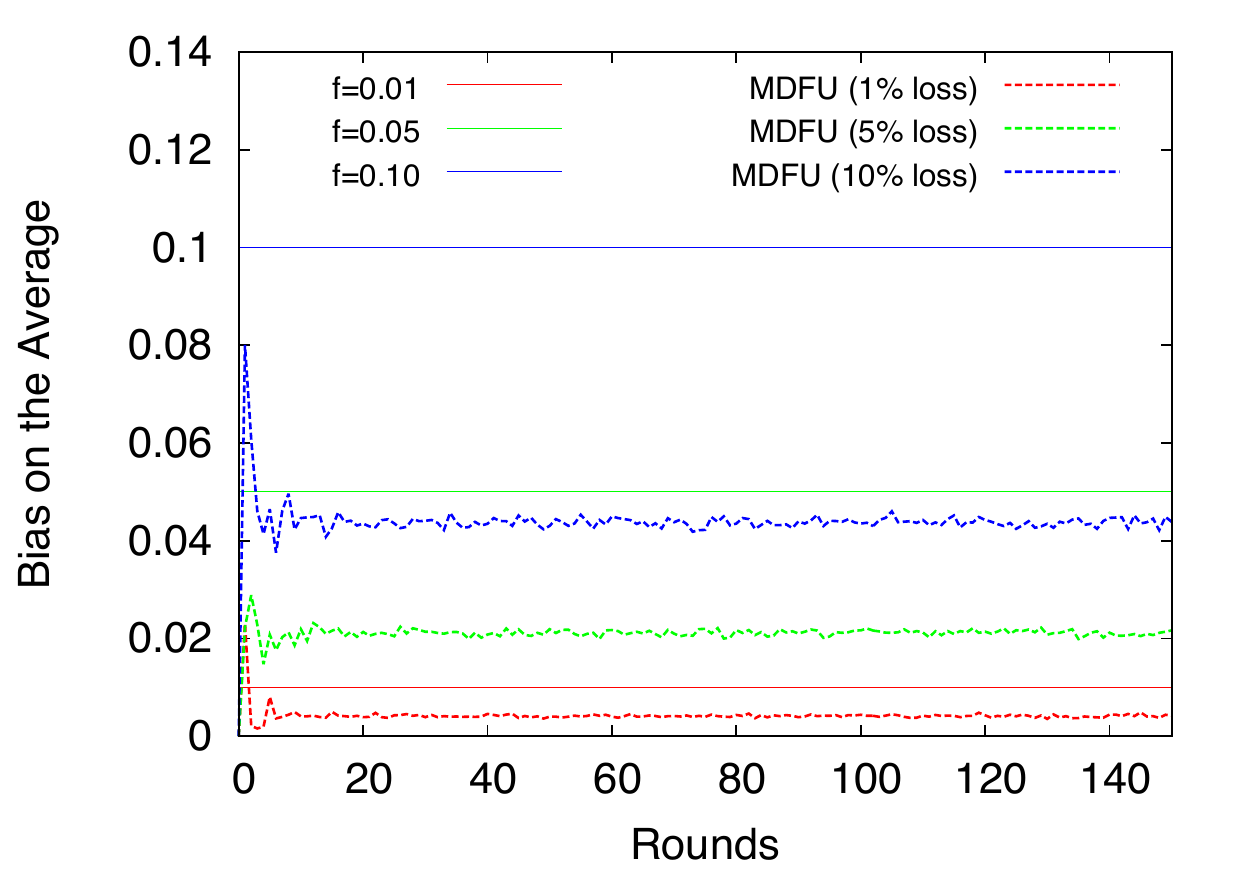}
\caption{Bias on the average estimation over rounds in a 1000 node 5000 link
	Erdos-Renyi network.}
\label{fig:mdfu-bias}
\end{figure}

\section{\MDFU with Linear Prediction}
\label{sec:vel}
The explanation for the behavior of \MDFU under message loss lies in that
only the estimate converges, but flows keep steadily increasing over time.
This can be seen in the formula:
$F_{out}(j) \leftarrow F_{out}(j) + e_i/\left(2D_{ij}\right)$
where the flow sent to some neighbor increases at each round by a value
depending on the estimate and their mutual degrees. What happens is that
during convergence, the extra flow that each of two nodes send over a link
tend to the same value, and the extra outgoing flow cancels out the extra
incoming flow. We can say that it is the \emph{velocity} (rate of increase)
of flows over a link that converge (to some different value for each link).

This means that, even if the estimate had already converged to the correct
value, given a message loss, the extra flow that should have been received
is not added to the estimate, implying a discrete deviation from the correct
value. This discrete deviation does not converge to zero; thus, we have a
bias towards lower values and the relative estimation error is prevented
from converging to zero given some message loss rate.

Here we improve \MDFU by exploring \emph{velocity convergence}. We keep,
for each link, the velocity (rate of increase) of the flow received. If a
message is lost, we predict what would have been the flow received, given
the stored flow, the velocity and the rounds passed since the last message
received over that link, i.e., we perform a \emph{linear prediction} of
incoming flow. When a message is received we update the flow and recalculate
the velocity. This algorithm is presented in Algorithm~\ref{mdfulp-alg}.

\begin{algorithm}[t]
\label{mdfulp-alg}
\caption{\MDFULP. Pseudocode for node $i$. $e_i$ is the estimate of node
$i$. $F_{in}(j)$ is the cumulative inflow from node $j$. $F_{out}(j)$ is the
cumulative outflow to node $j$. $V(j)$ is the velocity of incoming flow from
node $j$. $R(j)$ is the number of rounds since the last message received
from node $j$.
}
\dontprintsemicolon

\tcp{initialization}
$e_i \leftarrow v_i$\;
\ForEach{$j\in N_i$}{
$F_{in}(j) \leftarrow 0$\;
$F_{out}(j) \leftarrow e_i/\left(2D_{ij}\right)$\;
$V(j) \leftarrow 0$\;
$R(j) \leftarrow 1$\;
}
\BlankLine\;

\ForEach{round}{
\tcp{communication phase}
\ForEach{$j\in N_i$}{
Send $j$ message $\langle i,F_{out}(j)\rangle$\;
}
\tcp{computation phase}
\ForEach{$\langle j,F\rangle$ received}{
$V(j) \leftarrow (F - F_{in}(j)) / R(j)$\;
$R(j) \leftarrow 0$\;
$F_{in}(j) \leftarrow F$\;
}
$e_i \leftarrow v_i + \sum_{j\in N_i} (F_{in}(j) + V(j) \times R(j) - F_{out}(j)) $\;
\ForEach{$j\in N_i$}{
$F_{out}(j) \leftarrow F_{out}(j) + e_i/\left(2D_{ij}\right)$\;
$R(j) \leftarrow R(j) + 1$\;
}
}
\end{algorithm}

Under no message loss \MDFULP is the same as \MDFU and the theoretical
results on convergence speed also apply to \MDFULP. Under message loss the
velocities converge over time and the prediction will be increasingly more
accurate. Therefore, message loss should not cause discrete deviations in
the estimate, allowing the estimation error to converge to zero.

We have evaluated \MDFULP for the same network as before, but now with a
wide range of message loss rates. We have observed that the behavior under
message loss rates below 50\% is almost indistinguishable from the behavior
under no message loss.
Figure~\ref{fig:cvrmse-loss-lp} shows the CVRMSE and maximum relative error
for 0\%, 60\%, 70\%, and 80\% message loss rates. It can be seen that even for
60\% loss rate, after 60 rounds we have basically the same estimation errors
as under no message loss.

\begin{figure}[t]
	\centering
	\includegraphics[width=0.49\textwidth]{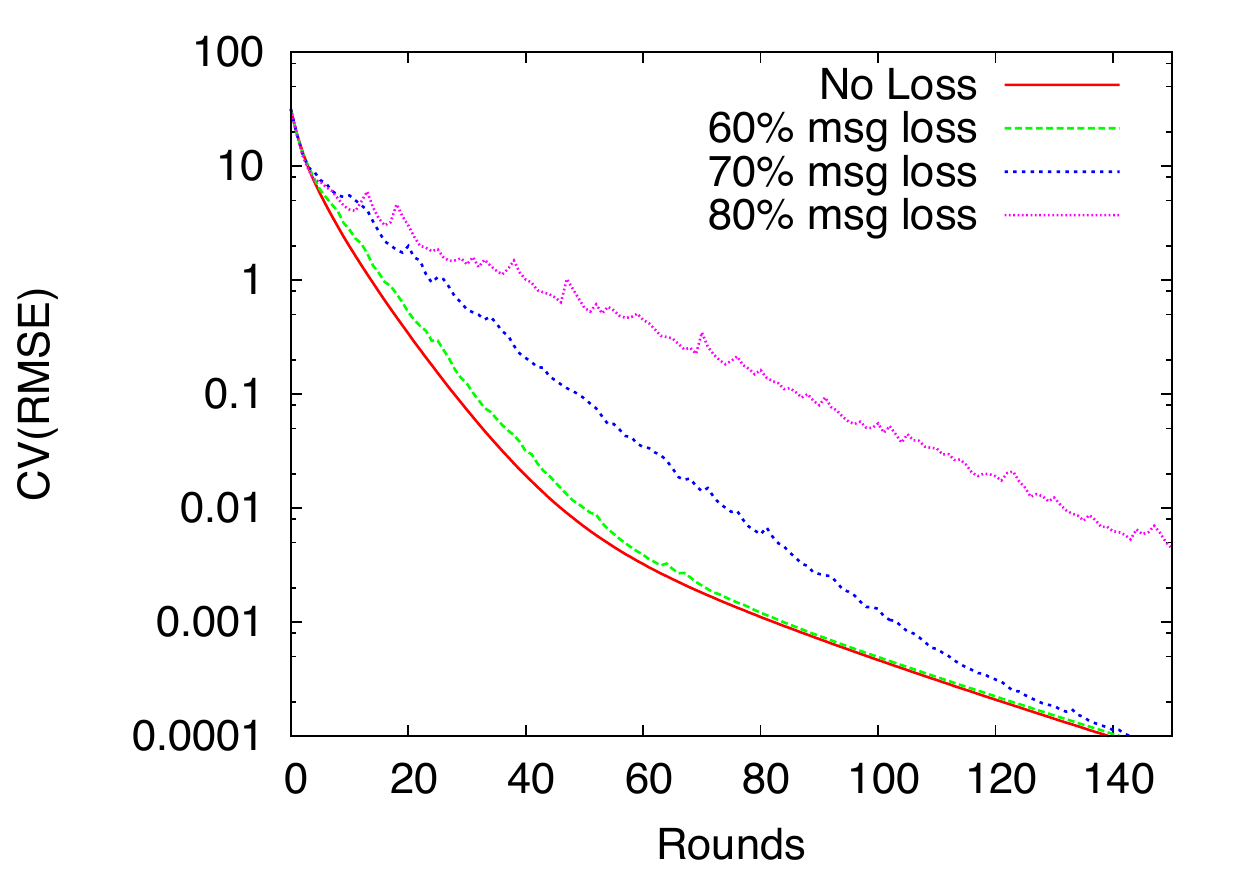}
	\includegraphics[width=0.49\textwidth]{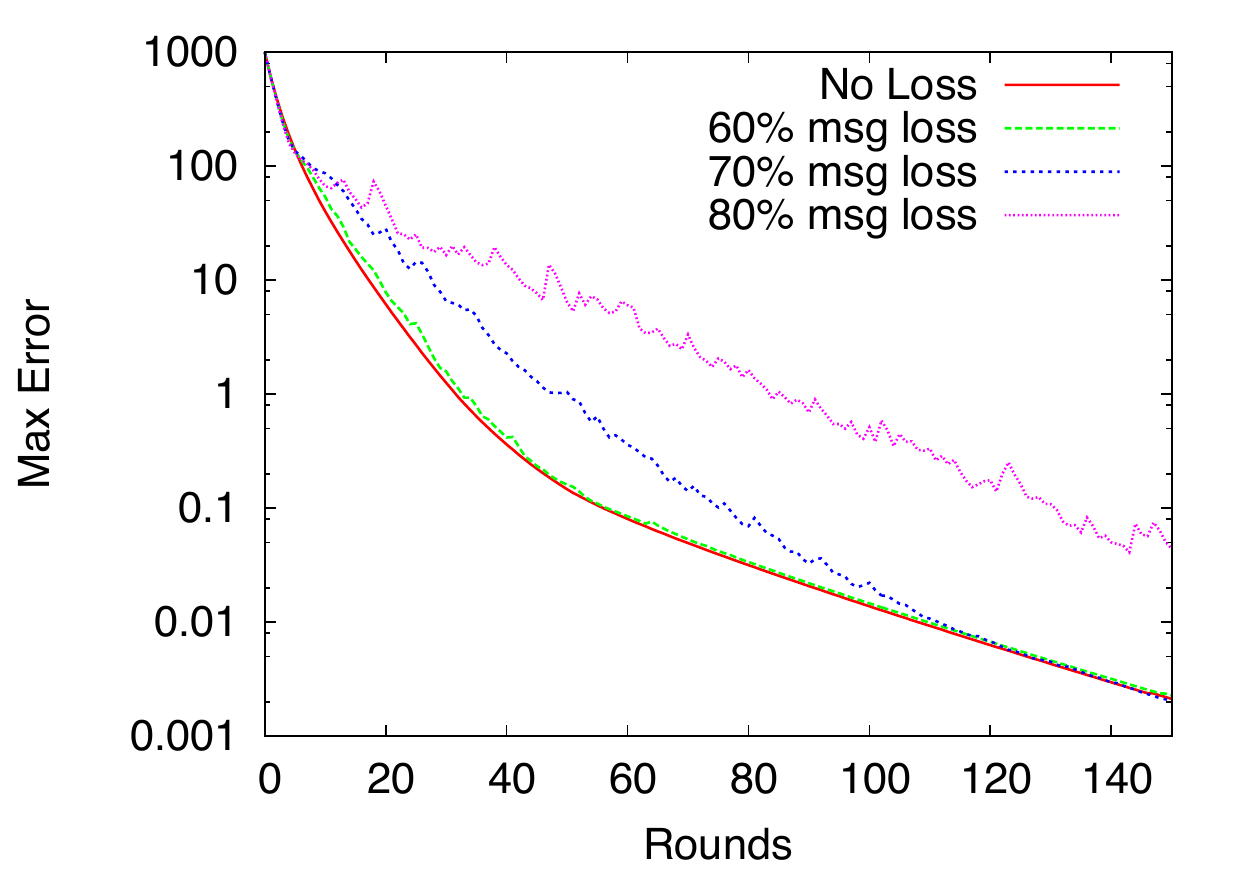}
	\caption{Coefficient of variation of the RMSE and maximum relative
	error for \MDFULP in a 1000 node 5000 link Erdos-Renyi network.}
	\label{fig:cvrmse-loss-lp}
\end{figure}


\section{Continuous Estimation Over Time-Varying Input Values}
\label{sec:value-change}

\begin{figure}[t]
\centering
\includegraphics[width=0.49\textwidth]{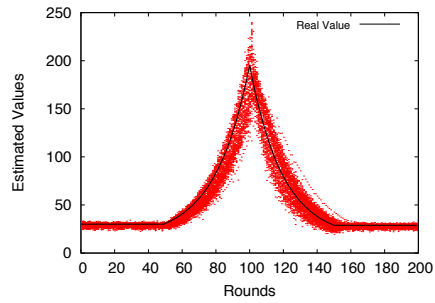}
	\caption{Estimated value over rounds in a 1000 node 5000 link
	Erdos-Renyi network, with changes of the initial input value at 50\% of the nodes.}
	\label{fig:mdfu-loss-cvrmse}
\end{figure}

Up to thus point we have considered that the input values $v_i$ are fixed
throughout the computation. In most practical situations this will not be
the case and input values will change along time. The common approach in MD
algorithms is to periodically reset the algorithm and start a new run that
freezes the new input values and aggregates the new average. Naturally,
resets are inefficient and mechanisms that can adapt the ongoing computation
have the potential to adjust the estimates in a much shorter number of
rounds.

Without any further modifications, MDFU (and MDFU-LP) share with FU the
capability of adapting to input value changes, since $v_i$ is considered in
the computation of the local estimate $e_i$, and this
regulates how much the outgoing flows are to be incremented. If $v_i$
decreases, $e_i$ decreases in the same proportion and node $i$ will share
less through its flows to the neighbours. The converse occurring when $v_i$
increases. The overall effect is convergence to the new average, even if
multiple nodes are having changes in their input values. 

In Figure \ref{fig:mdfu-loss-cvrmse} we show an example of how MDFU handles
input value changes. In this setting, starting at round 50 and during 50
rounds, we increase by 5\% in each round the input value in 500 nodes (a
random half of the 1000 nodes). In the following 50 rounds, the same 500
nodes will have its value decreased by 5\% per round. Initial input values
are chosen uniformly at random (from 25 to 35) and the run is made with
message loss at 10\%. In Figure  \ref{fig:mdfu-loss-cvrmse} one can observe
that individual estimates\footnote{To avoid clutering the graph only shows
individual estimate evolution for a random sample of 100 of the 1000 nodes.}
closely follow the global average, with only a slight lag of some rounds. 

Notice that the lag could never be zero, since we are updating the new
global average (black line) instantaneously and even the fastest theoretical
algorithm would need information that takes \emph{diameter} rounds to acquire.

\begin{small}
\bibliographystyle{plain}
\bibliography{./Comprehensive}
\end{small}

\end{document}